\documentclass[10pt, conference]{ieeeconf}      


\IEEEoverridecommandlockouts                              

\usepackage[utf8x]{inputenc}
\usepackage{amsmath}

\usepackage{amssymb}
\usepackage{amsfonts}
\usepackage{exscale}
\usepackage{graphicx}
\usepackage{booktabs}
\usepackage{authblk}
\usepackage{float}
\usepackage[makeroom]{cancel}
\usepackage{setspace}
\usepackage{soul}
\usepackage{epstopdf}
\makeatletter
\@addtoreset{chapter}{part}
\makeatother  
\usepackage{caption}
\usepackage{subcaption}
\usepackage{kantlipsum,cuted}
\usepackage{mathtools}
\mathtoolsset{showonlyrefs}
\usepackage{cite}

\usepackage{tabularx}

\usepackage{amssymb}
\usepackage{amsfonts}
\usepackage{exscale}
\usepackage{graphicx}
\usepackage{booktabs}
\usepackage{authblk}
\usepackage{float}

\usepackage[makeroom]{cancel}
\usepackage{setspace}
\usepackage{soul}
\usepackage{epstopdf}

\makeatletter
 \makeatother  
\usepackage{caption}
\usepackage{subcaption}
 \usepackage{kantlipsum,cuted}
\usepackage{cite}

\newtheorem{theorem}{Theorem}[section]

\newtheorem{problem}[theorem]{Problem}
\newtheorem{proposition}[theorem]{Proposition}
\newtheorem{corollary}[theorem]{Corollary}

\newtheorem{remark}[theorem]{Remark}
\newtheorem{assumption}[theorem]{Assumption}
\numberwithin{equation}{section}

\DeclareMathOperator{\atantwo}{atan2}

\title{\LARGE \bf Passivity-based design and analysis of phase-locked loops\bf }

\author{Daniele Zonetti$^\star$, Oriol Gomis-Bellmunt, Eduardo Prieto-Araujo, Marc Cheah-Ma\~{n}e\thanks{\noindent All authors are with the Centre d'Innovaci\'{o} Tecnol\`{o}gica en Convertidors Est\`{a}tics i Accionaments (CITCEA-UPC), Departament d'Enginyeria El\`{e}ctrica, Universitat Polit\`{e}cnica de
Catalunya, 08028 Barcelona, Spain. $^\star$Corresponding author: \tt\small daniele.zonetti@upc.edu }}

\begin{document}

\maketitle
\thispagestyle{empty}


\begin{abstract}
    We consider a grid-connected voltage source converter (VSC) and address the problem of estimating the grid angle and frequency—information that is essential for an appropriate operation of the converter. We design phase-locked loop (PLL) algorithms with guaranteed stability properties, under the assumption that the grid is characterized by relatively high short-circuit-ratio (SCR) and inertia. In particular we derive, using passivity arguments, generalization of the conventional synchronous reference frame (SRF) and arctangent (ATAN) PLL, which are the standard solutions in power applications. The analysis is further extended to the case of connection of the VSC to a low-inertia grid, ensuring robustness of the algorithms in case of frequency variations. The results are validated on a benchmark including the grid, the VSC and related controllers.
\end{abstract} 

\section{Introduction}
The ever increasing penetration of renewable energy sources (RES) in the existing power grid is driving dramatic changes in the way power systems are designed and operated. In this continuously evolving scenario, since the integration of RES is enabled by voltage source converters (VSCs),  many questions and challenges arise about their operation and control~\cite{bose}. An appropriate operation of VSCs can be enforced by a variety of controllers, typically consisting of hierarchical control loops operated at different time-scales. The majority of these architectures rely on the definition of a suitable rotating reference frame that is locked to the angle of the voltage at the point of common coupling (PCC)---an information that, unfortunately, is not available to the designer. As a result, an appropriate algorithm that estimates the grid angle must be designed to enforce synchronization of such controllers and then of the converter. \\
Phase-locked loop (PLL) algorithms are the conventional schemes employed to achieve this objective and the analysis of their stability properties in different, perturbed scenarios, has considerably attracted the interest of the power systems and control community\cite{chung2000phase,teodorescu}. To facilitate the analysis, it is usually assumed that PLL algorithms can be operated at a time-scale much slower than the time-scale at which the VSC controllers operate and physical dynamics evolve. Fundamental stability properties of the conventional synchronous reference frame (SRF)-PLL under this assumption have been already studied in~\cite{rantzer} and \cite{abramovitch}. Yet, many questions arise about the ability of the VSCs to synchronize under especially degraded conditions resulting, for example, from the connection of the VSC to grids characterized by low short-circuit-ratio (SCR) \cite{molinasPLL,karimi-weak}, or low inertia~\cite{sun2019impact,rueda}.  However, in all these works only the traditional SRF-PLL mechanism is analysed and no constructive procedure are provided for the design of these schemes. \\
 
In this paper we use passivity arguments~\cite{VAN} to constructively design generalized PLL mechanisms that guarantee---under the aforementioned time-scale separation assumption---a correct estimation of the grid angle and frequency, paving the road to a rigourous stability analysis and potential redesign for grid-connected VSCs in weak and low-inertia grid scenarios. Interestingly enough, specific instances of such schemes correspond to  the conventional SRF-PLL and arctangent (ATAN)-PLL~\cite{miskovic2018linear}, which can be eventually recovered by appropriate selection of some free parameters. Two scenarios are considered. First, we assume that the VSC is connected to a strong grid with a relatively high inertia. Hence, the grid angle can be represented as a balanced three-phase sinusoidal voltage source at constant frequency. Next, we extend our results to the case where the frequency is subject to bounded variations, similar to~\cite{rueda}, where an observer-based framework was employed to analyse low-inertia scenarios. Finally, we validate our results by simulations on a benchmark including the grid, converter and low-level controllers dynamics.\\

The remainder of the paper is organised as follows. Fundamental modeling assumptions and the definition of the estimation problem in suitable coordinates are elaborated in Section~\ref{sec:problem-setup}. Section~\ref{sec:SRF-PLL} and Section~\ref{sec:ATAN-PLL} are then dedicated to the passivity-based design of generalized SRF- and ATAN-PLL---assuming that the grid is characterized by a large inertia. Robustness of these estimation mechanisms to frequency variations is then analysed in Section~\ref{sec:robustness}. The results are validated in Section~\ref{sec:simulations} and next summarized in Section~\ref{sec:conclusions}, where we further provide guidelines for future investigation.\\

\noindent\textbf{Notation.} The set $\mathbb{R}_{>0}$ (respectively $\mathbb{R}_{\geq 0})$ denotes  the set of real positive (respectively non-negative) real numbers. The set $2\mathbb{Z}$ (respectively $2\mathbb{Z}+1$) denotes the subset of even (respectively odd) integers. The set $\mathcal S$ denotes the unit circle in $\mathbb{R}^2$. For given  vectors ${x_1\in\mathbb{R}^{n}}$, $x_2\in\mathbb{R}^{m}$,  $x:=(x_1,x_2)\in\mathbb{R}^{n+m}$ denotes the stacked collection of such vectors and $\|x\|$ denotes its Euclidean norm. \mbox{$\mathrm{Id}:\mathbb{R}\rightarrow\mathbb{R}$} denotes the identity function, \mbox{$\atantwo:\mathcal S\rightarrow [-\pi\;\pi)$} denotes the 2-arguments arctangent function.
For a given angle $\vartheta\in\mathbb{R}$, we define the three-phase vectors:
\begin{equation}\resizebox{1\hsize}{!}{%
$\overline{\sin}(\vartheta):=\begin{bmatrix}
\sin(\vartheta)\\
\sin(\vartheta-\frac{2}{3}\pi)\\
\sin(\vartheta+\frac{2}{3}\pi)
\end{bmatrix}\in\mathbb{R}^3,\quad
\overline{\cos}(\vartheta):=\begin{bmatrix}
\cos(\vartheta)\\
\cos(\vartheta-\frac{2}{3}\pi)\\
\cos(\vartheta+\frac{2}{3}\pi)
\end{bmatrix}\in\mathbb{R}^3.$}
\end{equation}
\section{Problem setup}\label{sec:problem-setup}
In this section we discuss the model considered for the analysis and formulate the problem of estimation of the grid angle and frequency. We make the following assumptions. \vspace{0.05cm}

\begin{assumption}\label{ass:1}
The VSC is characterized by a stable current control scheme, which guarantees instantaneous and exact tracking of the desired currents.
\end{assumption}\vspace{0.05cm}

\begin{assumption}\label{ass:2}
The grid dynamics is characterized by a sufficiently high short-circuit-ratio (SCR).
\end{assumption}\vspace{0.05cm}

Assumption~\ref{ass:1} is usually justified by the fast operation of the current controller as compared to the dynamics of the estimation scheme---a property that can be enforced by appropriate tuning of the gains. Assumption~\ref{ass:2} is verified whenever the grid is characterized by a sufficiently small impedance, so that its dynamics evolve at a time-scale much faster than the time-scale at which the estimation scheme operates.\\
An immediate consequence of these assumptions is that the grid and converter dynamics  can be approximated by their steady-state model and the voltage at the point of common coupling (PCC) can be modeled by a balanced, three-phase, purely sinusoidal signal
\begin{equation}\label{eq:voltage}
    v(t):=V\;\overline{\sin}(\theta(t))\in\mathbb{R}^3,\qquad \theta(t):=\omega t+\phi,
\end{equation}
where $V\in\mathbb{R}$ and $\theta\in\mathbb{R}$ denote the voltage amplitude and angle, with $\omega\in\mathbb{R}_{>0}$ and $\phi\in\mathbb{R}$ denoting the frequency and phase shift respectively. The estimation problem consists then in the design of a sufficiently fast mechanism able to recover the angle and the frequency of \eqref{eq:voltage}, based solely on instantaneous measurements of such a signal. \\
A simple, static approach to solve this problem consists in considering the so-called $\alpha\beta$ transformation\footnote{\label{note:zero}Since the signal \eqref{eq:voltage} is balanced, the zero component can be neglected with no loss of generality.} of \eqref{eq:voltage}, given by:
\begin{equation}
    v_{\alpha\beta}=(V_\alpha,V_\beta)=T_{\alpha\beta}v,\quad T_{\alpha\beta}:=\frac{1}{\gamma}\begin{bmatrix}
    0 &-\frac{\sqrt 3}{2} &\frac{\sqrt 3}{2}\\
    0 &-\frac{1}{2} &-\frac{1}{2}
    \end{bmatrix},
\end{equation}
with $\gamma:=\sqrt{3/2}$, and from which immediately follows:
  \begin{equation}
v_{\alpha\beta}=\gamma V\begin{bmatrix}
  \cos(\theta)\\
  \sin(\theta)
  \end{bmatrix}
  \end{equation}
  The exact value of the angle can be then recovered via:
    \begin{equation}
      \theta=\atantwo(V_\beta,V_\alpha),
  \end{equation}
  and the frequency obtained by computing its first time-derivative.  However, this solution has two major drawbacks that stymie its practical implementation: first, the computation of the angle might be ill-defined for some $v_{\alpha\beta}$; second, frequency estimation relies on exact derivation, an operation that is extremely sensitive to noise and harmonics.    For this reason, dynamic approaches are preferred 
and consist in the design of a (possibly dynamic) frequency estimate $\hat{u}(t)$ for the  estimator:
  \begin{equation}\label{eq:general-estimator}
  \begin{aligned}
      \dot{\hat\theta}(t)&=\hat{u}(t),
\end{aligned}
  \end{equation}
  with $\hat \theta\in\mathbb{R}$, $\hat{\theta}_0\in\mathbb{R}$ denoting the angle estimate and its initial guess respectively. The objective of such design is to ensure that the angle estimation error $\delta:=\hat\theta-\theta\in\mathbb{R}$, captured by the dynamics
  \begin{equation}\label{eq:sys-error}
    \begin{aligned}
        \dot\delta(t)&=-\omega+\hat{u}(t),
    \end{aligned}
\end{equation}
   converges to zero after reasonable transients---a fact that also implies convergence of  $\hat{u}(t)$ to $\omega$. Hence, the grid actual frequency is ultimately provided by $\hat u(t)$. Based on these considerations, we can recast the estimation problem as the following stabilization problem---also referred as \textit{synchronization} problem---where the function $\hat{u}(t)$ can be interpreted as a control signal to be designed.\vspace{0.05cm}
  
    \begin{problem}\label{problem:estimation}
  Consider the system \eqref{eq:sys-error}. Determine $\hat{u}(t)\in\mathbb{R}$   such that 
\begin{equation}
\lim_{t\to\infty}\hat\theta(t)=\theta(t).
\end{equation}
\end{problem}\vspace{0.05cm}

Unfortunately, this problem is complicated by the fact that the signal $\delta\in\mathbb{R}$ is not directly measurable. To overcome this issue, consider the following coordinates change:
  \begin{equation}
  \begin{aligned}
        \hat v_{dq}=(\hat V_d, \hat V_q)=T_{dq}(\vartheta)v,\quad T_{dq}(\vartheta):=\frac{1}{\gamma}\begin{bmatrix}
        \overline{\sin}(\vartheta)\\
        \overline{\cos}(\vartheta)
        \end{bmatrix},
      \end{aligned}
      \end{equation} 
      where $\vartheta:=\hat\theta-\frac{\pi}{2}$, which is referred as the $dq$ transformation of the signal \eqref{eq:voltage} with transformation angle $\vartheta\in\mathbb{R}$. We have then:
      \begin{equation}\label{eq:voltage-dq}
      \hat v_{dq}=
          \gamma V\begin{bmatrix}
      \cos(\delta)\\
      \sin(\delta)
      \end{bmatrix}:=\gamma V  \mathsf{T}(\delta),
      \end{equation}
  where we have  further defined the operator $\mathsf{T}:\mathbb{R}\rightarrow\mathcal S$.
    By calculating the first time-derivative of \eqref{eq:voltage-dq} and using \eqref{eq:sys-error} we obtain the following dynamical system
\begin{equation}
    \begin{aligned}
    \dot{\hat V}_d&=- \dot{\delta}    \gamma V\sin(\delta)=-(\hat{u}-\omega)\hat V_q\\
    \dot{\hat V}_q&=\dot{\delta} \gamma V\cos(\delta)=(\hat{u}-\omega)\hat V_d,
    \end{aligned}
\end{equation}
that in compact form reads:
\begin{equation}\label{eq:sys-dq}
\begin{aligned}
\dot{\hat{v}}_{dq}&=J_2(\hat{u}-\omega)\hat v_{dq}, \quad J_2:=\begin{bmatrix}
0 &-1\\1 &0
\end{bmatrix}\in\mathbb{R}^{2\times 2}.
\end{aligned}
\end{equation}
Note that in contrast with the scalar system \eqref{eq:sys-error}, the state of \eqref{eq:sys-dq} is measurable and defined on the set 
$$
\mathcal{V}:=\gamma V\mathcal S\subset\mathbb{R}^2.$$
We make now the following observation. The set of \textit{assignable} equilibria for this system, \textit{i.e.} the set of equilibria that can be assigned via a constant (equilibrium) control, is given by the whole state space $\mathcal{V}$.  Moreover, for a given $v_{dq}\in\mathcal{V}$, the following relation holds:
\begin{equation}\label{eq:vdqstar}
v_{dq}=\gamma V \mathsf{T}(\delta^h),
\end{equation}
where
$$\delta^h:=\delta_\mathrm{ref}+h\pi,\qquad h\in 2\mathbb{Z},$$ 
for some $\delta_\mathrm{ref}\in[-\pi\;\pi)$.
 As a result, stabilization of $\hat v_{dq}$ to $v_{dq}$ further implies convergence of the angle estimation error $\delta$ to $\delta_\mathrm{ref}$ (modulo $2\pi$). Based on these considerations, Problem \ref{problem:estimation} can be then reformulated as the following stabilization problem in $dq$ coordinates, where correct estimation of the grid angle is imposed by picking a $\delta_\mathrm{ref}=0$.\vspace{0.05cm} 

\begin{problem}\label{problem:dq-stabilization}
Consider the system \eqref{eq:sys-dq} and let 
\begin{equation}
v_{dq\star}:=\gamma V  \mathsf{T}(h\pi)=(\gamma V,0).
\end{equation}
Determine $\hat{u}(t)\in\mathbb{R}$ such that 
\begin{equation}
\lim_{t\to\infty}\hat v_{dq}(t)=v_{dq\star}.
\end{equation}
\end{problem}

\section{SRF-PLL}\label{sec:SRF-PLL}
We now derive a synchronization mechanism with guaranteed stability properties, referred in the sequel as generalized synchronous reference frame (gSRF)-PLL. We have the following preliminary proposition. \vspace{0.05cm}

\begin{proposition}\label{prop:passivity-1}
Consider the system \eqref{eq:sys-dq},  let $v_{dq}\in\mathcal{V}$ and define
\begin{equation}\label{eq:incremental-1}
     \tilde{u}:=\hat{u}-\omega,\quad \tilde y_1:=-v_{dq}^\top J_2\hat v_{dq}.
\end{equation}
Then the  map $\tilde{u}\rightarrow \tilde y_1$   is passive with storage function 
\begin{equation}\label{eq:energy-H1}
    \mathcal{H}_{1}(\hat v_{dq}):=\frac{1}{2}
    \|\hat v_{dq}-v_{dq}\|^2.
\end{equation}
\end{proposition}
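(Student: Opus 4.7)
The plan is to verify the standard passivity inequality $\dot{\mathcal{H}}_1 \leq \tilde u\,\tilde y_1$ by directly differentiating the storage function \eqref{eq:energy-H1} along the trajectories of \eqref{eq:sys-dq}. Since $v_{dq}\in\mathcal{V}$ is fixed, I would begin with
\[
\dot{\mathcal{H}}_1 = (\hat v_{dq} - v_{dq})^\top \dot{\hat v}_{dq},
\]
and then substitute the dynamics \eqref{eq:sys-dq} to obtain
\[
\dot{\mathcal{H}}_1 \;=\; (\hat u - \omega)(\hat v_{dq} - v_{dq})^\top J_2 \hat v_{dq} \;=\; \tilde u\bigl[\hat v_{dq}^\top J_2 \hat v_{dq} \;-\; v_{dq}^\top J_2 \hat v_{dq}\bigr].
\]

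The essential---and really only---step is to invoke the skew-symmetry of $J_2$, which makes the quadratic form $\hat v_{dq}^\top J_2 \hat v_{dq}$ vanish identically. What remains is exactly $\tilde u\bigl(-v_{dq}^\top J_2 \hat v_{dq}\bigr) = \tilde u\,\tilde y_1$, so the dissipation inequality holds with equality; the map $\tilde u \mapsto \tilde y_1$ is in fact not merely passive but lossless.

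There is essentially no obstacle here: the entire argument is an algebraic cancellation enabled by the skew-symmetry of the rotation generator $J_2$, and it is reinforced by the fact that \eqref{eq:sys-dq} leaves $\mathcal{V}$ invariant, so $\tfrac{1}{2}\|\hat v_{dq}\|^2$ is constant along solutions. I would therefore present the chain of equalities above in two or three lines, and remark in passing that losslessness (rather than mere passivity) is the structural property that will be exploited in the subsequent passive output feedback design of the gSRF-PLL.
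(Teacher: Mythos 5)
Your proof is correct and follows exactly the paper's argument: differentiate $\mathcal{H}_1$ along \eqref{eq:sys-dq} and use skew-symmetry of $J_2$ to obtain $\dot{\mathcal{H}}_1=\tilde y_1\tilde u$. The additional observation that the map is in fact lossless is accurate but not needed for the subsequent results.
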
\vspace{0.05cm}

\begin{proof}
Consider the energy function \eqref{eq:energy-H1}. By calculating its derivative along the system's trajectories we get:
\begin{equation}
\begin{aligned}
    \dot{\mathcal{H}}_{1}&=(\hat{v}_{dq}-v_{dq})^\top \dot{\hat v}_{dq}=
    (\hat{v}_{dq}-v_{dq})^\top  J_2\tilde{u}\hat v_{dq }=
    \tilde y_1\tilde{u}
    \end{aligned}
\end{equation}
    where in the second equivalence we simply replaced \eqref{eq:sys-dq}, while the last one follows from skew-symmetry of $J_2$ the definition of $\tilde y_1$, thus completing the proof.
\end{proof}
\vspace{0.05cm}

The passive property of the system can be then employed to constructively design  a class of estimation algorithms in the form of PI controllers driven by the ouput $\tilde y_1$, as explained in the following proposition.\vspace{0.05cm}

\begin{proposition}\label{prop:SRF-PLL-dq}
Let $v_{dq } \in\mathcal{V}$. Consider the system \eqref{eq:sys-dq} in closed-loop with the PLL:
\begin{equation}\label{eq:gSRF-PLL-dq}
\begin{aligned}
    \hat{u}&=-K_P\Phi(\tilde y_1)+\hat{\omega}\\
    \dot{\hat{\omega}} &=-K_I\tilde y_1,
    \end{aligned}
\end{equation}
with state $\hat{\omega}\in\mathbb{R}$, $\tilde y_1$ given by \eqref{eq:incremental-1}, $K_I$, $K_P$ positive scalars and $\Phi$ is a scalar function verifying 
$\Phi(0)=0$, $\Phi(s)s>0$ for any $s\neq 0$.  Then for any initial condition $(v_{dq0},\omega_0)\in\mathcal{V}\times\mathbb{R}$ the system's trajectories converge to the set
$$\mathcal{E}_{dq}:=(v_{dq },\omega)\cup(-v_{dq },\omega).$$
\end{proposition}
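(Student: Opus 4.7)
The plan is to promote the passivity result of Proposition~\ref{prop:passivity-1} to a Lyapunov-based convergence argument by exploiting the PI structure of \eqref{eq:gSRF-PLL-dq}. Since $\tilde u = \hat u - \omega = -K_P\Phi(\tilde y_1) + (\hat\omega-\omega)$, the natural candidate function to try is
\begin{equation}
W(\hat v_{dq},\hat\omega) := \mathcal{H}_1(\hat v_{dq}) + \frac{1}{2K_I}(\hat\omega-\omega)^2,
\end{equation}
which is positive definite with respect to the point $(v_{dq},\omega)$ and radially unbounded in $\hat\omega$.

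The first key step is to differentiate $W$ along the closed-loop dynamics. Using Proposition~\ref{prop:passivity-1}, one has $\dot{\mathcal{H}}_1 = \tilde y_1\tilde u$, and since $\dot{\hat\omega}=-K_I\tilde y_1$ the integral term contributes $-\tilde y_1(\hat\omega-\omega)$. After substituting the expression for $\tilde u$ and cancelling the cross term, I expect to land on
\begin{equation}
\dot W = -K_P\,\tilde y_1\,\Phi(\tilde y_1)\le 0,
\end{equation}
where non-positivity follows from the sector condition imposed on $\Phi$. This immediately yields boundedness of trajectories: $\hat v_{dq}$ is confined to the compact set $\mathcal{V}$ by the structure of \eqref{eq:sys-dq} (which preserves $\|\hat v_{dq}\|$ as $J_2$ is skew-symmetric), and the sublevel sets of $W$ bound $\hat\omega$.

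The second key step is to invoke LaSalle's invariance principle. The largest invariant set contained in $\{\dot W=0\}=\{\tilde y_1=0\}$ must be characterized. On this set, $v_{dq}^\top J_2\hat v_{dq}=0$ forces $\hat v_{dq}$ to be collinear with $v_{dq}$, and since both have norm $\gamma V$ this means $\hat v_{dq}=\pm v_{dq}$. Invariance further requires $\tilde y_1(t)\equiv 0$, hence $\Phi(\tilde y_1)\equiv 0$ and $\dot{\hat\omega}\equiv 0$, so $\hat\omega$ is constant and $\hat u=\hat\omega$. Plugging into \eqref{eq:sys-dq} with $\hat v_{dq}\neq 0$ then forces $\hat u-\omega=0$, i.e.\ $\hat\omega=\omega$. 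This identifies the invariant set with $\mathcal{E}_{dq}$ as claimed.

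The main obstacle I anticipate is the last step: the sector condition on $\Phi$ only gives $\tilde y_1\equiv 0$ on the invariant set, not $\hat v_{dq}=\pm v_{dq}$ pointwise for all time a priori — one has to explicitly use the invariance of the set together with the dynamics to rule out drifting motions on the circle $\mathcal{V}$ that still annihilate $\tilde y_1$ at isolated instants. Handling this cleanly requires combining the algebraic constraint $v_{dq}^\top J_2 \hat v_{dq}=0$ (which does hold pointwise on the invariant set) with the argument that $\hat\omega$ is constant, so that $\hat v_{dq}$ evolves under a linear oscillator at fixed offset frequency $\hat u-\omega$, and this offset must vanish lest the collinearity constraint be violated. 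Once this is in place, convergence of $(\hat v_{dq},\hat\omega)$ to $\mathcal{E}_{dq}$ follows from LaSalle applied on the compact sublevel sets of $W$.
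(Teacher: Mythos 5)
Your proposal is correct and follows essentially the same route as the paper: the same augmented storage function $\mathcal{W}_1=\mathcal{H}_1+\frac{1}{2K_I}(\hat\omega-\omega)^2$, the same cancellation yielding $\dot{\mathcal{W}}_1=-K_P\Phi(\tilde y_1)\tilde y_1\le 0$, and LaSalle's invariance principle. Your explicit characterization of the largest invariant set (collinearity from $v_{dq}^\top J_2\hat v_{dq}=0$ plus constancy of $\hat\omega$ forcing $\hat u-\omega=0$) is in fact a more careful rendering of the step the paper states tersely as ``$\dot{\mathcal{W}}_1=0$ implies $(\hat v_{dq},\hat\omega)=(\pm v_{dq},\omega)$''.
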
\vspace{0.05cm}

\begin{proof} First, note that at the equilibrium $\hat{\omega}=\omega.$ Then, consider the augmented energy function
    \begin{equation}
        \mathcal W_{1}(\hat v_{dq},\hat{\omega}):=\mathcal{H}_{1}(\hat v_{dq})+\frac{1}{2K_I}(\hat{\omega}-\omega)^2,
    \end{equation}
    which is positive definite and radially unbounded, being $K_I$ positive. Then
    \begin{equation}
    \begin{aligned}
        \dot{\mathcal{W}}_{1}&=\dot{\mathcal{H}}_{1}+\frac{1}{K_I}(\hat{\omega}-\omega)\dot{\hat{\omega}}\\
        &=\tilde y_1(-K_P\Phi(\tilde y_1)+\hat{\omega}-\omega) -(\hat{\omega}-\omega) \tilde y_1\\
        &=-K_P \Phi(\tilde y_1)\tilde y_1\\
        &\leq 0,
        \end{aligned}
    \end{equation}
    where in the second equivalence we replaced the dynamics of \eqref{eq:gSRF-PLL-dq} and used Proposition~\ref{prop:passivity-1}, while the last inequality results from positivity of $K_P$ and the definition of $\Phi$. The proof is completed using LaSalle invariance principle and noting that $\dot{\mathcal{W}}_{1}=0$ implies $(\hat v_{dq},\hat{\omega})= (\pm v_{dq },\omega)$.
\end{proof}\vspace{0.05cm}

Proposition \ref{prop:SRF-PLL-dq} merely states that for any initial guess of the gSRF-PLL, the trajectories converge to the set $\mathcal{E}_{dq}$, which consists of two isolated equilibria $(\pm v_{dq },\omega)\in\mathcal{V}\times\mathbb{R}$. We deduce then that the frequency $\omega$ is correctly estimated via \eqref{eq:gSRF-PLL-dq}. It is also  immediate to see that a specific implementation of the controller \eqref{eq:gSRF-PLL-dq} for $\Phi:=\frac{1}{\gamma V}\mathrm{Id}$ and $v_{dq }=v_{dq\star}$ is given by the conventional SRF-PLL, as stated in the following corollary.\vspace{0.05cm}    

 \begin{corollary}\label{corollary:SRF-PLL-dq} Consider the system \eqref{eq:sys-dq} in closed-loop with the SRF-PLL:
\begin{equation}\label{eq:SRF-PLL-dq}
\begin{aligned}
    \hat{u}&=-K_P\hat V_q+\hat{\omega}\\
    \dot{\hat{\omega}} &=-K_I\hat V_q,
    \end{aligned}
\end{equation}
with $K_I$, $K_P$ positive scalars. 
 Then for any initial condition  $(v_{dq0},\omega_0)\in\mathcal{V}\times\mathbb{R}$ the system's trajectories converge to the set
$$\mathcal{E}_{dq\star}:=(v_{dq\star},\omega)\cup(-v_{dq\star},\omega).$$
\end{corollary}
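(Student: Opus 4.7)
The plan is to recognize that the conventional SRF-PLL in \eqref{eq:SRF-PLL-dq} is nothing but a particular instance of the generalized scheme in \eqref{eq:gSRF-PLL-dq} specialized to the reference $v_{dq}=v_{dq\star}=(\gamma V,0)$, so that Proposition~\ref{prop:SRF-PLL-dq} can be invoked directly. Concretely, I would first compute the output $\tilde y_1$ under this choice of reference. Using the definition in \eqref{eq:incremental-1} together with $J_2 \hat v_{dq}=(-\hat V_q,\hat V_d)$, a quick calculation gives
\begin{equation}
\tilde y_1=-v_{dq\star}^\top J_2\hat v_{dq}=\gamma V\,\hat V_q .
\end{equation}

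Next, I would select the scalar function $\Phi:=\frac{1}{\gamma V}\mathrm{Id}$, which is admissible since $\gamma V>0$ ensures $\Phi(0)=0$ and $\Phi(s)s>0$ for $s\neq 0$. Plugging this choice and the expression for $\tilde y_1$ above into \eqref{eq:gSRF-PLL-dq} yields
\begin{equation}
\hat u=-K_P\,\hat V_q+\hat\omega,\qquad \dot{\hat\omega}=-(K_I\gamma V)\,\hat V_q ,
\end{equation}
which matches \eqref{eq:SRF-PLL-dq} after the strictly positive rescaling $K_I\mapsto K_I/(\gamma V)$ of the integral gain. Since positivity of the gains is preserved by this rescaling, the hypotheses of Proposition~\ref{prop:SRF-PLL-dq} are satisfied.

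Finally, I would invoke Proposition~\ref{prop:SRF-PLL-dq} with $v_{dq}=v_{dq\star}$: for any initial condition $(v_{dq0},\omega_0)\in\mathcal{V}\times\mathbb{R}$ the closed-loop trajectories converge to $\mathcal{E}_{dq}=(v_{dq\star},\omega)\cup(-v_{dq\star},\omega)=\mathcal{E}_{dq\star}$, which is exactly the claim. There is essentially no obstacle here beyond the book-keeping of constants; the substantive work has already been carried out in Propositions~\ref{prop:passivity-1} and~\ref{prop:SRF-PLL-dq}, and the corollary simply certifies that the widely used SRF-PLL fits the passivity-based template with an explicit identification of $\Phi$ and of the reference $v_{dq}$.
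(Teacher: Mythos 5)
Your proposal is correct and follows exactly the route the paper intends: the corollary is obtained by specializing Proposition~\ref{prop:SRF-PLL-dq} to $v_{dq}=v_{dq\star}$ (so that $\tilde y_1=\gamma V\,\hat V_q$) with $\Phi=\frac{1}{\gamma V}\mathrm{Id}$. Your explicit observation that the integral channel requires the positive rescaling $K_I\mapsto K_I/(\gamma V)$ is a detail the paper glosses over, and handling it as you do is the right book-keeping.
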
\vspace{0.05cm}

It must be noted that Corollary~\ref{corollary:SRF-PLL-dq} is inconclusive about the ability of the SRF-PLL to correctly estimate the angle. Indeed, while such objective is achieved if trajectories converge to $(v_{dq\star},\omega)$, convergence to $(-v_{dq\star},\omega)$ would result into an error of $\pi$ radians. To show that the latter is an exceptional case, let us rewrite the SRF-PLL \eqref{eq:SRF-PLL-dq} in polar coordinates using \eqref{eq:voltage-dq}, leading to the following nonlinear system:
  \begin{equation}\label{eq:SRF-PLL}
      \begin{aligned}
           \dot{\delta}& =-\omega-K_P\gamma V\sin(\delta)+\hat \omega\\
          \dot{\hat{\omega}} &= -K_I\gamma V\sin(\delta).
      \end{aligned}
  \end{equation}
By setting the left-hand side to zero, it is immediate to see that in such coordinates the system has infinitely many equilibria $\mathcal{E}_\star:=\cup_n(n\pi,\omega)$, with $n\in\mathbb{Z}$, and that these are mapped to the elements of $\mathcal{E}_{dq\star}$. More precisely, let us define
    \begin{align*}
        \mathcal{E}_\star^{+}&:=\cup_h(h\pi,\omega),\quad h\in 2\mathbb{Z} \\
        \mathcal{E}_\star^{-}&:=\cup_k(k\pi,\omega),\quad k\in 2\mathbb{Z}+1.
    \end{align*}
    Therefore, we have:
    \begin{equation}\label{eq:map-equilibria}
    \begin{aligned}
    \mathcal{E}_{dq\star}^+:&=(+v_{dq\star},\omega)=(\gamma V  \mathsf{T}(\delta^h),\omega),\quad 
    \forall\delta^h
    \in\mathcal{E}_\star^{+}\\
    \mathcal{E}_{dq\star}^-:&=(-v_{dq\star},\omega)=(\gamma V  \mathsf{T}(\delta^k),\omega),\quad 
    \forall\delta^k\in\mathcal{E}_\star^{-}.
    \end{aligned}
    \end{equation}  We are now ready for the following propositions, similar to~\cite{garces}.\footnote{The same result apply, with minor differences, to the gSRF-PLL \eqref{eq:gSRF-PLL-dq}.}\vspace{0.05cm}

\begin{proposition}\label{prop:SRF-linearized}
Any element of $\mathcal{E}_\star^{-}$ is a saddle equilibrium point for the system \eqref{eq:SRF-PLL}.
\end{proposition}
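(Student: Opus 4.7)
The plan is to prove this by linearizing the nonlinear system \eqref{eq:SRF-PLL} around an arbitrary equilibrium point in $\mathcal{E}_\star^-$ and showing that the resulting Jacobian has one strictly positive and one strictly negative eigenvalue, which is the defining property of a hyperbolic saddle.

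First, I would compute the Jacobian of the vector field in \eqref{eq:SRF-PLL} with respect to $(\delta,\hat\omega)$, obtaining
\begin{equation}
J(\delta) = \begin{bmatrix} -K_P\gamma V\cos(\delta) & 1 \\ -K_I\gamma V\cos(\delta) & 0 \end{bmatrix}.
\end{equation}
Evaluating this at any $(\delta,\hat\omega) = (k\pi,\omega) \in \mathcal{E}_\star^-$ with $k \in 2\mathbb{Z}+1$ gives $\cos(k\pi) = -1$, so
\begin{equation}
J(k\pi) = \begin{bmatrix} K_P\gamma V & 1 \\ K_I\gamma V & 0 \end{bmatrix},
\end{equation}
which is independent of the particular $k$ chosen.

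Next, I would compute the characteristic polynomial, which reads $\lambda^2 - (K_P\gamma V)\lambda - K_I\gamma V = 0$. Since $K_I, K_P, V, \gamma$ are all strictly positive, the discriminant $(K_P\gamma V)^2 + 4K_I\gamma V$ is strictly positive, so the two eigenvalues are real and given by
\begin{equation}
\lambda_{\pm} = \frac{K_P\gamma V \pm \sqrt{(K_P\gamma V)^2 + 4K_I\gamma V}}{2}.
\end{equation}
The product of eigenvalues equals the determinant $-K_I\gamma V < 0$, which immediately tells me that one eigenvalue is strictly positive and the other strictly negative. Equivalently, the square root exceeds $K_P\gamma V$ in magnitude, so $\lambda_+ > 0 > \lambda_-$.

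Since the Jacobian at every point of $\mathcal{E}_\star^-$ is hyperbolic with one stable and one unstable eigenvalue, each such equilibrium is a saddle, completing the argument. No obstacle is anticipated: the analysis is a direct linearization followed by inspection of the characteristic polynomial, with sign arguments that exploit the positivity of the controller gains and the amplitude $V$.
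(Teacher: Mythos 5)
Your proof is correct and follows exactly the route the paper intends: the paper's own proof is a one-line sketch ("linearize at $\delta^k\in\mathcal{E}_\star^{-}$ and evaluate the sign of the eigenvalues"), and you simply carry out that computation, correctly obtaining a Jacobian with negative determinant $-K_I\gamma V$ and hence real eigenvalues of opposite sign.
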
\vspace{0.05cm}

\begin{proof}
  For the proof, it suffices to consider the linearization of the system \eqref{eq:SRF-PLL} at equilibria $\delta^k\in\mathcal{E}_\star^{-}$ and evaluate the sign of the resulting eigenvalues.
\end{proof}\vspace{0.05cm}

    \begin{proposition}\label{prop:SRF-RoA} Any element of $\mathcal{E}_\star^+$ is a locally asymptotically stable equilibrium of the system \eqref{eq:SRF-PLL} and an inner estimate of the corresponding region of attraction (RoA) $\mathcal{R}_1^h\subset\mathbb{R}^2$, $h\in 2\mathbb{Z}$, is given by
    \begin{equation}\label{eq:RoA-SRF-PLL}\resizebox{1\hsize}{!}{$
    \hat{\mathcal{R}}_{1}^h :=\left\{(\delta,\hat{\omega})\in\mathbb{R}^2:\;\left(1-\cos(\delta-h \pi)\right)+\frac{(\hat{\omega}-\omega)^2}{2\gamma V}<\frac{K_I}{\gamma V}\right\}.$}
    \end{equation}
    \end{proposition}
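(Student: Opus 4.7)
The plan is to build a pendulum-type Lyapunov function whose largest sublevel set around $(h\pi,\omega)$ stays clear of the neighbouring saddle equilibria (identified in Proposition~\ref{prop:SRF-linearized}) and then invoke LaSalle's invariance principle inside that set. Since $h$ is even, $\cos(\delta-h\pi)=\cos\delta$ and $\sin(\delta-h\pi)=\sin\delta$, so one may work with the shifted variable $\tilde\delta:=\delta-h\pi$ without loss of generality, bringing the target equilibrium to $(0,\omega)$.

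As a candidate I would take the two-parameter family
\begin{equation*}
U(\delta,\hat\omega):=a\bigl(1-\cos(\delta-h\pi)\bigr)+b\,(\hat\omega-\omega)^2,\qquad a,b>0,
\end{equation*}
and differentiate along the closed-loop system \eqref{eq:SRF-PLL}. A short calculation yields
\begin{equation*}
\dot U=(a-2bK_I\gamma V)\sin\delta\,(\hat\omega-\omega)-aK_P\gamma V\sin^2\delta,
\end{equation*}
so the mixed term vanishes for the ratio $a/b=2K_I\gamma V$, leaving $\dot U=-aK_P\gamma V\sin^2\delta\le 0$. Up to an overall positive rescaling, this $U$ has the same level sets as the expression defining $\hat{\mathcal R}_1^h$, and the cutoff appearing on the right-hand side of the inequality is, by construction, the value of $U$ at the nearest saddles $(h\pi\pm\pi,\omega)$.

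The remainder is routine. First, $\hat{\mathcal R}_1^h$ is bounded—the frequency error satisfies $(\hat\omega-\omega)^2<2K_I$ directly, and $\delta$ is confined to an interval around $h\pi$ via $1-\cos(\delta-h\pi)<K_I/\gamma V$—forward-invariant because $\dot U\le 0$, and, by choice of the cutoff level, contains $(h\pi,\omega)$ but no other equilibrium of the dynamics. LaSalle's invariance principle then forces every trajectory starting in $\hat{\mathcal R}_1^h$ into the largest invariant subset of $\{\dot U=0\}\cap\hat{\mathcal R}_1^h=\{\sin(\delta-h\pi)=0\}\cap\hat{\mathcal R}_1^h$; the only point of the latter set is $\delta=h\pi$, and on the line $\delta\equiv h\pi$ the first equation of \eqref{eq:SRF-PLL} imposes $\hat\omega=\omega$, collapsing the invariant set to the single equilibrium. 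Local asymptotic stability together with the claimed inner estimate of the region of attraction follows at once.

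The main obstacle I expect is pinning down the weights $a,b$ so that both the cross term in $\dot U$ cancels and the resulting sublevel inequality reproduces exactly the form stated for $\hat{\mathcal R}_1^h$; once this balance is struck, the geometric exclusion of the saddles via Proposition~\ref{prop:SRF-linearized} and the LaSalle argument proceed automatically.
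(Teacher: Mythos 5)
Your construction is essentially the paper's: choosing $a=\gamma V$, $b=\tfrac{1}{2K_I}$ in your family (which satisfies your ratio $a/b=2K_I\gamma V$) gives exactly the Lyapunov function $\mathcal{W}_1^h(\delta,\hat\omega)=\gamma V\left(1-\cos(\delta-h\pi)\right)+\tfrac{1}{2K_I}(\hat\omega-\omega)^2$ used in the paper, the same derivative $\dot{\mathcal W}_1^h=-K_P\gamma^2V^2\sin^2(\delta-h\pi)\le 0$, the same LaSalle step, and the same idea of estimating the RoA by the largest compact sublevel set lying below the saddle energy.

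The gap is in your final identification step. With the cross-term-cancelling ratio, $U$ is a positive multiple of $\left(1-\cos(\delta-h\pi)\right)+\tfrac{(\hat\omega-\omega)^2}{2K_I\gamma V}$, whose largest compact sublevel set around $(h\pi,\omega)$ is $\left\{\left(1-\cos(\delta-h\pi)\right)+\tfrac{(\hat\omega-\omega)^2}{2K_I\gamma V}<2\right\}$, since the neighbouring saddles sit at level $2$. This does not coincide, up to an overall rescaling, with the stated set $\hat{\mathcal R}_1^h$, which weights the frequency error by $\tfrac{1}{2\gamma V}$ and uses the cutoff $\tfrac{K_I}{\gamma V}$: the two level-set families agree only if $K_I=1$, and the stated cutoff equals the saddle value only if $K_I=2\gamma V$. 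Conversely, the expression defining $\hat{\mathcal R}_1^h$ corresponds to weights with ratio $a/b=2\gamma V$, which does not kill the cross term in $\dot U$ unless $K_I=1$, so $\hat{\mathcal R}_1^h$ is not a forward-invariant sublevel set of your $U$; moreover, for $K_I>2\gamma V$ it is unbounded in $\delta$ (your boundedness claim $1-\cos(\delta-h\pi)<K_I/(\gamma V)$ then constrains nothing) and contains the saddles $(h\pi\pm\pi,\omega)$, so it cannot be an inner RoA estimate at all. In short, your argument correctly proves local asymptotic stability and delivers the estimate $\left\{\left(1-\cos(\delta-h\pi)\right)+\tfrac{(\hat\omega-\omega)^2}{2K_I\gamma V}<2\right\}$, but the claim that this ``reproduces exactly the form stated for $\hat{\mathcal R}_1^h$'' is false in general; that mismatch traces back to the formula in the proposition itself (the paper's own proof only invokes the largest compact sublevel set of $\mathcal W_1^h$ and never actually derives the displayed inequality), so you should either prove the sublevel-set form above or restrict the parameter regime in which the stated set is recovered.
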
\vspace{0.05cm}
    
    \begin{proof}
Consider the Lyapunov functions
    \begin{equation}
        \mathcal W_1^h (\delta,\hat \omega):=\gamma V(1-\cos(\delta- h\pi ))+\frac{1}{2K_I}(\hat{\omega}-\omega)^2,
    \end{equation}
     which are all positive definite in $[-\pi\;\pi]\times\mathbb{R}$, modulo $2\pi$, with respect to the corresponding equilibrium point $(h\pi,\omega)$. By computing its derivative along the system's trajectories, we get:
    \begin{equation}
\begin{aligned}
    \dot{\mathcal W}_1^h =&-K_P\gamma^2 V^2\sin^2(\delta- h\pi )+\gamma V\sin(\delta- h\pi )(\hat{\omega}-\omega)\\
    &+(\hat{\omega}-\omega)(-\gamma V\sin(\delta- h\pi  ))\\
    =&-K_P\gamma^2 V^2\sin^2(\delta- h\pi )\\
    \leq&0,
\end{aligned}
\end{equation}
    which suffices to prove boundedness of the trajectories. Local asymptotic stability can be then inferred using LaSalle invariance's principle since $\sin^2(\delta)=0$ implies $(\delta,\hat{\omega})=( h\pi ,\omega)$. An estimation of the RoA can be then computed taking the largest, compact sublevel set of $\mathcal{W}_1^h $.
    \end{proof}\vspace{0.05cm}

Recalling \eqref{eq:map-equilibria}, from Corollary \ref{prop:SRF-PLL-dq} and Proposition \ref{prop:SRF-linearized} we can then conclude that for any initial condition belonging to the set $\{\mathcal{V}\times\mathbb{R}\}/\mathcal{E}_{dq\star}^{-}$ the solutions of \eqref{eq:sys-dq}-\eqref{eq:SRF-PLL-dq} converge to $\mathcal{E}_{dq\star}^{+}$. Or, equivalently, that for any initial condition belonging to the set $\mathbb{R}^2/\mathcal{E}_{\star}^{-}$, the solutions of \eqref{eq:SRF-PLL} converge to a point of $\mathcal{E}_\star^+$. Hence, we say that for \textit{almost any}  initial conditions a correct estimation of the actual value of the angle and of the frequency is guaranteed. Equilibria and convergence properties of the incremental, scaled version of the SRF-PLL are illustrated in Fig. \ref{fig:plane} and Fig. \ref{fig:cylinder}, for different initial guesses and both in polar and $dq$ coordinates.
\begin{figure}
    \centering
    \includegraphics[width=0.5 \textwidth]{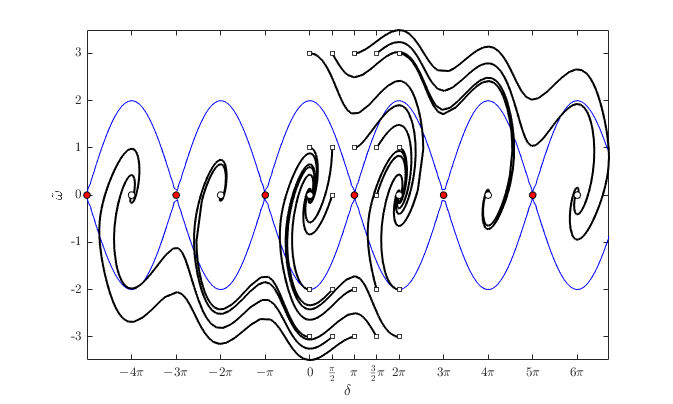}
    \caption{Solutions of the SRF-PLL dynamics in polar coordinates \eqref{eq:SRF-PLL}, for different initial guesses (square points) of the estimator. White and red circles denote respectively equilibria of $\mathcal{E}_\star^+$ and $\mathcal{E}_\star^-$, with blue lines denoting lower estimations of the RoAs of $\mathcal{E}_\star^+$.}
    \label{fig:plane}
\end{figure}
\begin{figure}
    \centering
    \includegraphics[width=0.5 \textwidth]{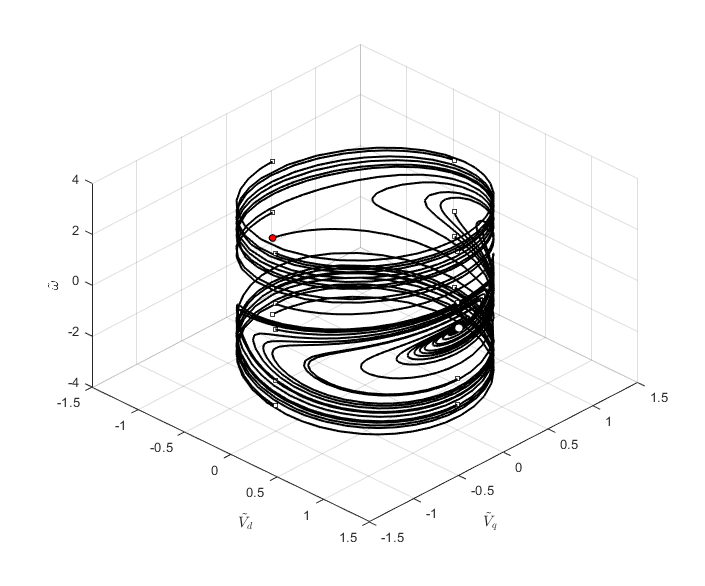}
    \caption{Solutions of the SRF-PLL dynamics in $dq$ coordinates \eqref{eq:sys-dq}-\eqref{eq:SRF-PLL-dq}, for different initial guesses (square points) of the estimator. White and red circles denote respectively the equilibria $(-v_{dq\star},\omega)$ and $(v_{dq\star},\omega)$. }
    \label{fig:cylinder}
\end{figure}
    An important observation is that the less accurate are the initial guesses, the poorest are the performances of the estimator. In polar coordinates, this is clearly captured by the fact that for some initial conditions the system's trajectories do not converge to the closest equilibrium, but rather exhibit a transients oscillating behavior before converging to farther equilibria. This suggests that information about the RoA of the equilibria of $\mathcal{E}_\star^{+}$, given by Proposition \ref{prop:SRF-RoA}, might be helpful to assess performances of the SRF-PLL, see also Remark~\ref{rem:RoA}.\vspace{0.05cm}

\section{ATAN-PLL}\label{sec:ATAN-PLL}
An alternative, widely diffused synchronization scheme for the estimation of the angle and of the frequency of \eqref{eq:voltage} relies on the use of a suitably defined inverse tangent function. We have first the following preliminary proposition. \vspace{0.05cm}

\begin{proposition}\label{prop:passivity-2}  Consider the system \eqref{eq:sys-dq}, let $v_{dq } \in\mathcal{V}$ and define
\begin{equation}\label{eq:incremental-2}
\tilde{u}:=\hat{u}-\omega,\quad \tilde y_2:=\atantwo(\hat V_q,\hat V_d)-\atantwo( V_{q },V_{d }).
\end{equation}
 Then, the map $\tilde{u}\rightarrow\tilde y_2$ is passive with storage function
 \begin{equation}
      \mathcal H_{2} (\hat v_{dq})=\frac{1}{2}\left[\atantwo(\hat V_q,\hat V_d)-\atantwo( V_{q },V_{d })\right]^2 
  \end{equation}
\end{proposition}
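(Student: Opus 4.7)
The plan is to exploit the fact that once $\hat v_{dq}$ is confined to the circle $\mathcal V = \gamma V\mathcal S$, the map $\atantwo(\hat V_q, \hat V_d)$ is essentially the polar angle of $\hat v_{dq}$, and that its time-derivative along the dynamics \eqref{eq:sys-dq} simplifies dramatically. Once this reduction is made, the passivity claim becomes a one-line chain-rule computation on $\mathcal H_2$.

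Concretely, I would proceed as follows. First, I recall the standard formula $\tfrac{d}{dt}\atantwo(y,x) = (x\dot y - y\dot x)/(x^2+y^2)$, valid wherever $\atantwo$ is smooth. Applying it to $\atantwo(\hat V_q,\hat V_d)$ and substituting the dynamics $\dot{\hat V}_d = -\tilde u\hat V_q$, $\dot{\hat V}_q = \tilde u\hat V_d$ coming from \eqref{eq:sys-dq}, the numerator collapses to $\tilde u(\hat V_d^2+\hat V_q^2)$, which cancels against the denominator $\|\hat v_{dq}\|^2 = \gamma^2 V^2$, giving
\begin{equation}
\frac{d}{dt}\atantwo(\hat V_q,\hat V_d) = \tilde u.
\end{equation}
Since $v_{dq}\in\mathcal V$ is a constant reference, $\atantwo(V_q,V_d)$ is constant in time, hence $\dot{\tilde y}_2 = \tilde u$. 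Differentiating the storage function $\mathcal H_2 = \tfrac{1}{2}\tilde y_2^2$ then yields
\begin{equation}
\dot{\mathcal H}_2 = \tilde y_2\,\dot{\tilde y}_2 = \tilde y_2\,\tilde u,
\end{equation}
which is exactly the passivity inequality (with equality, so the map is in fact lossless).

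The only delicate point is that $\atantwo$ is not globally smooth: it has a branch-cut discontinuity on the negative real axis, and at that locus $\tilde y_2$ jumps by $2\pi$. Off this measure-zero set, all the computations above are rigorously valid on $\mathcal V$, and that is what I would emphasize in the write-up. An alternative, cleaner way to bypass the branch issue is to use polar coordinates: writing $\hat v_{dq} = \gamma V\,\mathsf T(\delta)$ from \eqref{eq:voltage-dq} and $v_{dq} = \gamma V\,\mathsf T(\delta^h)$ from \eqref{eq:vdqstar}, one identifies $\tilde y_2 = \delta - \delta^h$ (modulo $2\pi$) and uses $\dot\delta = \tilde u$ from \eqref{eq:sys-error} to obtain $\dot{\tilde y}_2 = \tilde u$ immediately, after which the same one-line derivative of $\mathcal H_2$ closes the proof.
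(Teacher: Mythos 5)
Your proof is correct and follows essentially the same route as the paper: a chain-rule differentiation of $\mathcal H_2=\tfrac{1}{2}\tilde y_2^2$ along \eqref{eq:sys-dq}, using $\hat V_d^2+\hat V_q^2=\gamma^2V^2$ (i.e.\ $\hat v_{dq}\in\mathcal V$) to collapse the derivative of $\atantwo(\hat V_q,\hat V_d)$ to $\tilde u$ and conclude $\dot{\mathcal H}_2=\tilde y_2\tilde u$. Your explicit caveat about the branch-cut discontinuity of $\atantwo$ (and the polar-coordinate workaround) is a welcome extra precision that the paper's proof leaves implicit, but it does not change the argument.
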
\vspace{0.05cm}

\begin{proof}
The proof is similar to the proof of Proposition~\ref{prop:passivity-1}. Note that $\mathcal{H}_2=\frac{1}{2}\tilde y_2^2$. By calculating its derivative along the system's trajectories we get:
\begin{equation}\resizebox{1\hsize}{!}{$
    \dot{\mathcal{H}}_2=\left(-\tilde y_2\frac{\hat V_q}{\gamma^2 V^2}\right)\left( -\hat V_q \tilde{u}\right)+\left(\tilde y_2\frac{\hat V_d}{\gamma^2 V^2}\right)\left( \hat V_d \tilde{u}\right)=\tilde y_2\tilde{u},$}
\end{equation}
where in the last equivalence we used the fact that $\hat v_{dq}\in\mathcal{V}$, thus completing the proof.
\end{proof}\vspace{0.05cm}

We can now determine a class of estimators, referred as generalized arctangent (gATAN)-PLL, that guarantee a solution to the synchronization problem, similarly to Proposition~\ref{prop:SRF-PLL-dq}.\vspace{0.05cm}

\begin{proposition}\label{prop:ATAN-PLL-dq} Let $v_{dq } \in\mathcal{V}$. Consider the system \eqref{eq:sys-dq} in closed-loop with the PLL:
\begin{equation}\label{eq:gATAN-PLL-dq}
    \begin{aligned}
    \hat{u}&=-K_P\Phi(\tilde y_2)+\hat{\omega}\\
    \dot{\hat{\omega}}  &=-K_I\tilde y_2,
    \end{aligned}
\end{equation}
with state $\hat{\omega}\in\mathbb{R}$, $\tilde y_2$ given by \eqref{eq:incremental-2}, $K_I$, $K_P$ positive scalars and $\Phi$ is a scalar function verifying 
$\Phi(0)=0$, $\Phi(s)s>0$ for any $s\neq 0$. Then, the equilibrium point $(v_{dq },\omega)\in\mathcal{V}\times\mathbb{R}$ is globally asymptotically stable.
\end{proposition}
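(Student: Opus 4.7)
The plan is to replicate the Lyapunov argument from the proof of Proposition~\ref{prop:SRF-PLL-dq}, swapping in the passivity output and storage function supplied by Proposition~\ref{prop:passivity-2}. First I would introduce the augmented storage function
\begin{equation}
\mathcal W_2(\hat v_{dq},\hat\omega):=\mathcal H_2(\hat v_{dq})+\frac{1}{2K_I}(\hat\omega-\omega)^2,
\end{equation}
which, since $K_I>0$, is non-negative and, on the domain where $\atantwo$ is single-valued, vanishes only at the candidate equilibrium $(v_{dq},\omega)$; radial unboundedness in $\hat\omega$ is immediate, and $\hat v_{dq}$ lives on the compact circle $\mathcal V$, so there is no properness issue in that coordinate.

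Next I would differentiate $\mathcal W_2$ along the closed-loop dynamics. By Proposition~\ref{prop:passivity-2}, $\dot{\mathcal H}_2=\tilde y_2\tilde u$, and substituting $\tilde u=\hat u-\omega=-K_P\Phi(\tilde y_2)+(\hat\omega-\omega)$ together with $\dot{\hat\omega}=-K_I\tilde y_2$ yields the cancellation
\begin{equation}
\dot{\mathcal W}_2=\tilde y_2\bigl(-K_P\Phi(\tilde y_2)+\hat\omega-\omega\bigr)-(\hat\omega-\omega)\tilde y_2=-K_P\,\Phi(\tilde y_2)\,\tilde y_2\le 0,
\end{equation}
where the last inequality uses $K_P>0$ and the sign condition $\Phi(s)s>0$ for $s\ne 0$.

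Third, I would invoke LaSalle's invariance principle. The crucial improvement over the SRF case is that $\Phi(\tilde y_2)\tilde y_2=0$ forces $\tilde y_2=0$, and since both $\hat v_{dq}$ and $v_{dq}$ lie on the same circle $\mathcal V$ of radius $\gamma V$, the map $\hat v_{dq}\mapsto\atantwo(\hat V_q,\hat V_d)$ is an injection, so $\tilde y_2=0$ pins down $\hat v_{dq}=v_{dq}$ uniquely, with no antipodal companion. On the resulting invariant set, \eqref{eq:sys-dq} forces $\hat u=\omega$ (because $\hat v_{dq}\ne 0$), and then the algebraic relation $\hat u=-K_P\Phi(0)+\hat\omega=\hat\omega$ gives $\hat\omega=\omega$. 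Hence the largest invariant set contained in $\{\dot{\mathcal W}_2=0\}$ reduces to the single point $(v_{dq},\omega)$, establishing global asymptotic stability.

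The main obstacle is technical rather than conceptual: $\atantwo$ has a branch discontinuity at points of $\mathcal V$ with $\hat V_d<0$, $\hat V_q=0$—precisely the antipodal point $-v_{dq}$ that played the role of the saddle in Proposition~\ref{prop:SRF-PLL-dq}—so $\mathcal H_2$ is not even continuous there and the Lyapunov argument above lives on $\mathcal V\setminus\{-v_{dq}\}$. A careful write-up must either reinterpret ``global'' as ``on this maximal open subset of $\mathcal V\times\mathbb R$'' (the excluded set has measure zero and is repelled in the limit), or lift $\atantwo$ to a continuous map on the universal cover and rerun the same argument in lifted coordinates; both routes deliver the stated conclusion.
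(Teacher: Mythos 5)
Your proof follows essentially the same route as the paper's: the same augmented storage function $\mathcal W_2=\mathcal H_2+\frac{1}{2K_I}(\hat\omega-\omega)^2$, the same cancellation yielding $\dot{\mathcal W}_2=-K_P\Phi(\tilde y_2)\tilde y_2\le 0$, and the same LaSalle step with $\tilde y_2=0\Rightarrow\hat v_{dq}=v_{dq}$ (you merely spell out the additional step $\hat\omega=\omega$ on the invariant set). Your closing caveat about the $\atantwo$ branch discontinuity is a legitimate refinement that the paper's proof silently ignores, and it correctly explains why the ``global'' claim should really be read modulo the branch cut, consistently with the almost-global picture in the polar-coordinate analysis.
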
\vspace{0.05cm}

\begin{proof}
Consider the energy function
  \begin{equation}
      \mathcal W_2 (\hat v_{dq},\hat{\omega})=\mathcal H_2(\hat v_{dq})+\frac{1}{2K_I}(\hat{\omega}-\omega)^2,
  \end{equation}
  which is positive definite and radially unbounded. We have then that
 \begin{equation}
 \begin{aligned}
      \dot{\mathcal{W}}_2 =&\dot{\mathcal H}_2+\frac{\hat{\omega}-\omega}{K_I}\dot{\hat{\omega}}\\
      =&\tilde y_2\left(-K_P\Phi(\tilde y_2)+\hat{\omega}-\omega\right)+\frac{\hat{\omega}-\omega}{K_I}\left(-K_I\tilde y_2\right)\\
        =&-K_P\Phi(\tilde y_2)\tilde y_2\\
      &\leq 0.
      \end{aligned}
      \end{equation}
      Global asymptotic stability follows using LaSalle invariance principle, noting that $\tilde y_2=0$ implies $\hat v_{dq}=v_{dq}$.
\end{proof}

A specific implementation of the controller \eqref{eq:ATAN-PLL-dq} for \mbox{$\Phi=\mathrm{Id}$} and   $v_{dq }=v_{dq\star}\in\mathcal{V}$ is given by the conventional ATAN-PLL~\cite{miskovic2018linear}, as stated in the following corollary.\vspace{0.05cm}    

 \begin{corollary}\label{corollary:ATAN-PLL}  Consider the system \eqref{eq:sys-dq} in closed-loop with the ATAN-PLL:
 \begin{equation}\label{eq:ATAN-PLL-dq}
\begin{aligned}
    \hat{u}&=-K_P\atantwo(\hat V_q,\hat V_d)+\hat{\omega}\\
    \dot{\hat{\omega}} &=-K_I\atantwo(\hat V_q,\hat V_d),
    \end{aligned}
\end{equation}
with $K_I$, $K_P$ positive scalars. Then, the equilibrium point $(v_{dq\star},\omega)\in\mathcal{V}\times\mathbb{R}$ is globally asymptotically stable.
\end{corollary}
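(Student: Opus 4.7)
The plan is to obtain Corollary~\ref{corollary:ATAN-PLL} as an immediate specialization of Proposition~\ref{prop:ATAN-PLL-dq} by fixing the free design elements as $\Phi = \mathrm{Id}$ and $v_{dq} = v_{dq\star}$. First I would verify admissibility of these choices: the target $v_{dq\star} = (\gamma V, 0)$ belongs to $\mathcal{V}$ since it equals $\gamma V\,\mathsf{T}(0)$, and the identity function trivially satisfies the sector condition $\Phi(0) = 0$, $\Phi(s)s > 0$ for all $s \neq 0$ required by the proposition.

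Next I would simplify the incremental output \eqref{eq:incremental-2}. Plugging in $V_{q\star} = 0$ and $V_{d\star} = \gamma V > 0$ and recalling that $\atantwo$ returns $0$ on the positive $d$-axis, the reference contribution $\atantwo(V_{q\star}, V_{d\star})$ vanishes. Hence $\tilde y_2$ reduces to $\atantwo(\hat V_q, \hat V_d)$, and substituting this together with $\Phi = \mathrm{Id}$ into \eqref{eq:gATAN-PLL-dq} recovers verbatim the ATAN-PLL \eqref{eq:ATAN-PLL-dq} in the statement.

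Finally, applying Proposition~\ref{prop:ATAN-PLL-dq} to the resulting closed-loop system delivers global asymptotic stability of the equilibrium $(v_{dq\star}, \omega) \in \mathcal{V} \times \mathbb{R}$, which is exactly the claim. I do not foresee any genuine obstacle; the corollary is a reinterpretation of the preceding proposition, and the only point deserving attention is the convention on $\atantwo$ at the positive $d$-axis, which matches the definition introduced in the notation paragraph.
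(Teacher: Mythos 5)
Your proposal is correct and follows exactly the route the paper intends: the corollary is stated as an immediate specialization of Proposition~\ref{prop:ATAN-PLL-dq} with $\Phi=\mathrm{Id}$ and $v_{dq}=v_{dq\star}$, and your checks (admissibility of $v_{dq\star}\in\mathcal{V}$, the sector condition on $\mathrm{Id}$, and the vanishing of $\atantwo(V_{q\star},V_{d\star})$ on the positive $d$-axis so that $\tilde y_2$ reduces to $\atantwo(\hat V_q,\hat V_d)$) are precisely the details the paper leaves implicit. No gap; nothing further is needed.
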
\vspace{0.05cm}

An immediate consequence of Corollary~\ref{corollary:ATAN-PLL} is that for any initial guess of the estimator both the frequency and angle estimation errors converges to zero. Similar to the case of the SRF-PLL an accurate estimation of the RoA of the equilibria can be used to assess performances of the estimator. For this purpose, it is convenient to transform the system in polar coordinates. Recalling \eqref{eq:voltage-dq}, \eqref{eq:sys-error}-\eqref{eq:ATAN-PLL-dq} can be rewritten as
\begin{equation}\label{eq:ATAN-PLL}
    \begin{aligned}
    \dot{\delta}&=-K_P\atantwo(\gamma V\sin(\delta),\gamma V\cos(\delta))+\hat{\omega}-\omega\\
    \dot{\hat\omega}&=-K_I\atantwo(\gamma V\sin(\delta),\gamma V\cos(\delta)).
    \end{aligned}
\end{equation}
We are now ready for the following proposition.\vspace{0.05cm}

\begin{proposition}\label{prop:ATAN-PLL} The system \eqref{eq:ATAN-PLL} admits infinitely many locally asymptotically stable equilibria $(h\pi,\omega)$, with $h\in 2\mathbb{Z}$, and an inner estimate of the corresponding RoA $\mathcal{R}_2^h\subset\mathbb{R}^2$ is given by
\begin{equation}\label{eq:RoA-ATAN-PLL}
 \hat{\mathcal{R}}_{2}^h :=\left\{(\delta,\hat{\omega})\in\mathbb{R}^2:\;4(\delta- h\pi )^2+(\hat{\omega}-\omega)^2<4\pi^2K_I\right\}.
\end{equation}
\end{proposition}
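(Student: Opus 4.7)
The plan is to exploit the fact that, for even $h$ and $\delta$ in the open strip $((h-1)\pi,(h+1)\pi)$, the $\atantwo$ admits the clean simplification
$\atantwo(\gamma V\sin\delta,\gamma V\cos\delta)=\delta-h\pi$,
since the positive prefactor $\gamma V$ drops out and the range-wrapping of $\atantwo$ leaves a linear expression. Written in the shifted coordinates $\tilde\delta:=\delta-h\pi$ and $\tilde\omega:=\hat\omega-\omega$, the closed-loop system~\eqref{eq:ATAN-PLL} therefore reduces, \emph{within this strip}, to the linear cascade
$\dot{\tilde\delta}=-K_P\tilde\delta+\tilde\omega$, $\dot{\tilde\omega}=-K_I\tilde\delta$,
which makes local asymptotic stability of each $(h\pi,\omega)$ essentially transparent. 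The point of the argument is to repackage this via Lyapunov so as to read off an explicit inner estimate of the RoA.

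I would then propose the Lyapunov candidate
$\mathcal W_2^h(\delta,\hat\omega):=\tfrac{K_I}{2}(\delta-h\pi)^2+\tfrac{1}{2}(\hat\omega-\omega)^2$,
which is positive definite at $(h\pi,\omega)$. A direct computation inside the continuity strip gives
$\dot{\mathcal W}_2^h=-K_IK_P(\delta-h\pi)^2\le 0$,
the cross term in $\tilde\delta\tilde\omega$ cancelling by design; this is simply the polar-coordinate counterpart of the passivity identity of Proposition~\ref{prop:passivity-2}, used exactly as in the proof of Proposition~\ref{prop:ATAN-PLL-dq}. LaSalle's invariance principle then confines any trajectory on which $\dot{\mathcal W}_2^h\equiv 0$ to $\{\tilde\delta=0\}$, and invariance of this set along the vector field forces $\tilde\omega=0$, delivering local asymptotic stability of the corresponding even equilibrium; $2\pi$-periodicity in $\delta$ extends the conclusion simultaneously to every $h\in 2\mathbb Z$.

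The inner RoA estimate is then obtained as the largest compact sublevel set of $\mathcal W_2^h$ that is \emph{strictly contained} in the continuity strip $|\delta-h\pi|<\pi$. Since on each level curve of $\mathcal W_2^h$ the quantity $(\delta-h\pi)^2$ is maximized along $\hat\omega=\omega$, the containment condition reduces to an explicit elliptic inequality of the form announced in~\eqref{eq:RoA-ATAN-PLL}, modulo the particular normalization of the Lyapunov function. The only genuine obstacle is this strict containment requirement: at $|\tilde\delta|=\pi$ the $\atantwo$ has a jump and, in complete analogy with Proposition~\ref{prop:SRF-linearized}, the odd-multiple-of-$\pi$ equilibria sitting on that discontinuity act as saddles across which a trajectory could slip into the basin of a neighbouring even equilibrium. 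Once that containment is secured, the remainder of the argument is routine quadratic Lyapunov analysis of the linear subsystem obtained after the $\atantwo$ simplification, and no further global invariance machinery is required.
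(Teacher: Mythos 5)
Your proof is correct and follows essentially the same route as the paper's: the reduction $\atantwo(\gamma V\sin\delta,\gamma V\cos\delta)=\delta-h\pi$ on the strip $|\delta-h\pi|<\pi$, a quadratic Lyapunov function that is just a $K_I$-rescaling of the paper's $\mathcal W_2^h$, LaSalle's invariance principle for local asymptotic stability, and the RoA estimate taken as the largest sublevel set compatible with the analysis. Your explicit insistence that the sublevel set be contained in the continuity strip is simply a more careful statement of what the paper leaves implicit in ``largest, compact sublevel set,'' so there is no substantive difference in method.
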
\vspace{0.05cm}

\begin{proof}
The equilibria of \eqref{eq:ATAN-PLL} are the solutions of
 \begin{equation}
      0=\hat\omega-\omega,\quad 0=\atantwo(\gamma V\sin(\delta),\gamma V\cos(\delta)),
      \end{equation}
 from which follows that there exist infinitely many equilibria $(h\pi,\omega)$, with  $h\in 2\mathbb{Z}$. Note now that for any interval $[-\pi\;\pi)$ modulo $2\pi$ we further have
 $$\tilde y_2=\atantwo(\gamma V\sin(\delta),\gamma V\cos(\delta))=\delta- h\pi.$$
 Define the energy functions
  \begin{equation}
      \mathcal W_2^h (\delta,\hat{\omega})=\frac{1}{2}(\delta- h\pi )^2+\frac{1}{2K_I}(\hat{\omega}-\omega)^2,
  \end{equation}
   which are all locally positive definite with respect to the corresponding equilibrium point $( h\pi ,\omega)$. We have then that
 \begin{equation}
 \begin{aligned}
      \dot{\mathcal{W}}_2^h &=(\delta- h\pi )\left(-K_P\phi(\tilde y_2)+\hat{\omega}-\omega\right)-(\hat{\omega}-\omega)  (\delta- h\pi )\\
      &=-K_P\Phi(\tilde y_2)\tilde y_2\\
      &\leq 0.
      \end{aligned}
      \end{equation}
      Local asymptotic stability follows using LaSalle invariance principle, noting that $\tilde y_2=0$ implies $\delta= h\pi $, $\hat{\omega}=\omega.$ An estimation of the RoA can be then computed taking the largest, compact sublevel set of $\mathcal{W}_2^h $.
\end{proof}

\begin{figure}
    \centering
    \includegraphics[width=0.5 \textwidth]{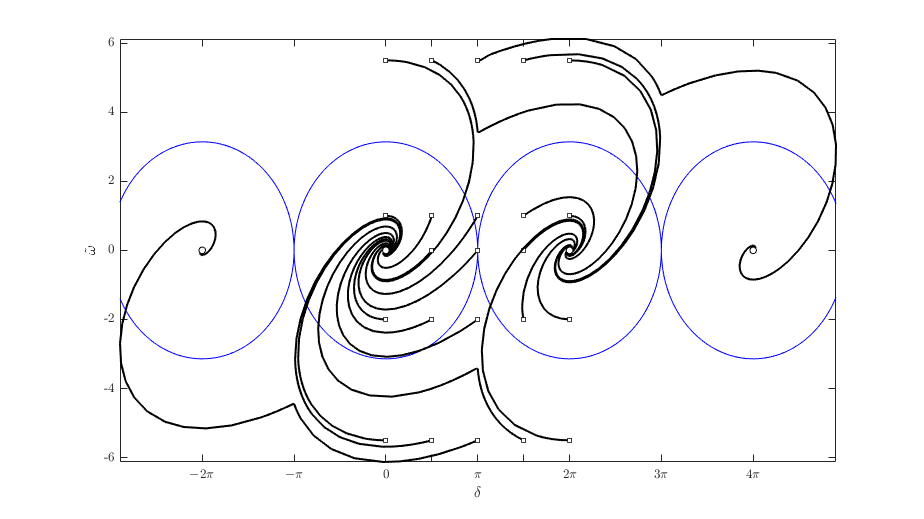}
    \caption{Solutions of the ATAN-PLL dynamics in polar coordinates \eqref{eq:ATAN-PLL}, for different initial guesses (square points) of the estimator. White circles denote the system equilibria,  while blue lines denote lower estimations of the corresponding RoAs.}
    \label{fig:ATAN-portrait}
\end{figure}
 
  Equilibria and convergence properties of the incremental, scaled version of the ATAN-PLL are illustrated in Fig.~\ref{fig:ATAN-portrait}.\vspace{0.05cm}
 
 \begin{remark}\label{rem:linear-ATAN} When restricted to the interval $[-\pi\;\pi)$, the ATAN-PLL estimation error dynamics \eqref{eq:ATAN-PLL} reduce to a linear system. This should be contrasted with the estimation error dynamics \eqref{eq:SRF-PLL}, which are nonlinear on the same interval.
 \end{remark}\vspace{0.05cm}
  \begin{remark}\label{rem:w-estimate}
    Note that convergence of \eqref{eq:gSRF-PLL-dq} or \eqref{eq:gATAN-PLL-dq} implies that both the signal $\hat u(t)$ and $\hat{\omega}(t)$ converge to the actual frequency $\omega$ and thus can be both interpreted as an estimate of such a signal. 
    \end{remark}\vspace{0.05cm}
 \begin{remark}\label{rem:RoA} By inspection of $\hat{\mathcal{R}}_1^h $ and $\hat{\mathcal{R}}_2^h $, it is immediate to see that larger values of the integral gain $K_I$ determine a larger RoA (or, at least, of its estimate) for both synchronization schemes. One should expect this to have also a relevant impact on the performances of the estimator, whose time-constant is defined indeed by $\tau:=1/K_I.$
\end{remark}
\section{Robustness to frequency variations}\label{sec:robustness}
In the previous section we developed a stability analysis of the SRF- and ATAN-PLL schemes under the assumption that the voltage at the PCC can be represented by a three-phase balanced, purely sinusoidal voltage source with a constant frequency. In this section we lift this last assumption, by extending such results to the case where the frequency is described via the following dynamics
\begin{equation}\label{eq:sys-omega}
\begin{aligned}
 \dot\omega(t)= \eta(t),
\end{aligned}
\end{equation}
where $\eta:\mathbb{R}_{\geq 0}\rightarrow\mathbb{R}$ is a function representing the rate of change of frequency (RoCoF). We now borrow a standard theorem from~\cite{khalilbook} to prove that, under the assumption that $\eta(t)$ is bounded, \textit{i.e.} $\vert \eta(t)\vert<\eta_\mathrm{max}$, the trajectories of the ATAN-PLL ultimately converge to a set that is shaped by the PLL gains (similar conclusions can be drawn for the SRF-PLL, a result that was obtained also in~\cite{rueda} using different arguments). To ease notation, let us define $\tilde{\omega}:=\hat{\omega}-\omega$. From \eqref{eq:ATAN-PLL} and \eqref{eq:sys-omega} we obtain:
\begin{equation}\label{eq:ATAN-PLL-omega}
    \begin{aligned}
    \dot{\delta}&=-K_P\atantwo(\gamma V\sin(\delta),\gamma V\cos(\delta))+\tilde{\omega}\\
    \dot{\tilde\omega}&=-K_I\atantwo(\gamma V\sin(\delta),\gamma V\cos(\delta))-\eta.
    \end{aligned}
\end{equation}
We have then the following proposition.\vspace{0.05cm}

\begin{proposition}\label{prop:robustness} For any initial condition $(\delta_0,\tilde{\omega}_{0})\in\mathcal{R}_2^0 $ the trajectories of the system \eqref{eq:ATAN-PLL-omega} are ultimately bounded,
with ultimate bound given by
\begin{equation}
\begin{aligned}
&\min_{\substack{\epsilon>0}} 
\frac{\sqrt{(\epsilon^2+K_I)\lambda_\mathrm{max}(P_\epsilon)}}{K_I\lambda_\mathrm{min}(P_\epsilon)\lambda_\mathrm{min}(Q_\epsilon)}\eta_\mathrm{max} \quad\mathrm{s.t.}\; P_\epsilon>0, Q_\epsilon>0,
\end{aligned}    
\end{equation}
with    \begin{equation}\label{eq:ATAN-PLL-P}
    \begin{aligned}
 P_\epsilon :=\begin{bmatrix}
    1 &-\frac{\epsilon}{2}\\
    -\frac{\epsilon}{2}& K_I
    \end{bmatrix},\;
 Q_\epsilon :=\begin{bmatrix}
    K_P-\epsilon K_I &-\frac{\epsilon}{2}K_P\\
    -\frac{\epsilon}{2}K_P& \epsilon
    \end{bmatrix}.
    \end{aligned}
\end{equation}
\end{proposition}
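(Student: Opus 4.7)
The plan is to leverage Remark~\ref{rem:linear-ATAN}: on the strip $|\delta|<\pi$ (corresponding to $h=0$) one has $\atantwo(\gamma V\sin\delta,\gamma V\cos\delta)=\delta$, so the perturbed dynamics \eqref{eq:ATAN-PLL-omega} collapse to the linear system with non-vanishing perturbation
\begin{equation*}
\dot{x}=Ax+b\,\eta,\qquad A:=\begin{bmatrix}-K_P&1\\-K_I&0\end{bmatrix},\;\; b:=\begin{bmatrix}0\\-1\end{bmatrix},\;\; x:=(\delta,\tilde\omega).
\end{equation*}
Since the initial condition lies in $\mathcal{R}_2^0$---a compact sublevel set of the Lyapunov function $\mathcal{W}_2^0$ used in Proposition~\ref{prop:ATAN-PLL}---the unperturbed trajectory is confined to this strip. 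I would first argue, via a forward-invariance argument using the new Lyapunov function introduced below, that the perturbed trajectory also remains in the strip provided $\eta_{\max}$ is not too large, so that the linear reduction is valid throughout.

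Second, I would introduce the parameterised Lyapunov candidate $V_\epsilon(x):=x^\top P_\epsilon x$, with $\epsilon>0$ a free scalar to be tuned. The cross term $-\epsilon\,\delta\tilde\omega$ implicit in $P_\epsilon$ is essential to render $\dot V_\epsilon$ strictly negative in the $\delta$-direction---recall that $\mathcal W_2^0$ produces only a negative semi-definite derivative---and $\epsilon$ serves as the free minimisation variable in the final bound. The positive-definiteness conditions $P_\epsilon>0$ and $Q_\epsilon>0$ stated in the proposition translate, via Schur-complement or determinant inspection, into an explicit open admissible range for $\epsilon$ (roughly $\epsilon<2\sqrt{K_I}$ for $P_\epsilon$, plus a further constraint from $\det Q_\epsilon>0$).

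Third, a direct calculation along the linear dynamics yields
\begin{equation*}
\dot V_\epsilon=-x^\top Q_\epsilon x+2x^\top P_\epsilon b\,\eta,
\end{equation*}
after which a Cauchy--Schwarz estimate of the perturbation term combined with the sandwich bounds $\lambda_{\min}(P_\epsilon)\|x\|^2\le V_\epsilon\le\lambda_{\max}(P_\epsilon)\|x\|^2$ and $x^\top Q_\epsilon x\ge\lambda_{\min}(Q_\epsilon)\|x\|^2$ delivers an inequality of the form $\dot V_\epsilon\le-\alpha V_\epsilon+\beta\sqrt{V_\epsilon}$ valid outside a ball of computable radius. Applying the standard result on ultimate boundedness under non-vanishing perturbations (Theorem~4.18 of \cite{khalilbook}) then yields the stated ultimate bound for each admissible $\epsilon$, and pointwise minimisation over the admissible range produces the sharpest estimate.

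The principal obstacle I foresee is the forward-invariance step: since the analysis reduces to a linear system only on the strip $|\delta|<\pi$, any escape invalidates the reduction, and one must combine the containment of $\mathcal{R}_2^0$ with an implicit restriction on $\eta_{\max}$ to close the argument. A secondary technicality is reconciling carefully the sandwich inequalities with the precise factor $\sqrt{(\epsilon^2+K_I)\lambda_{\max}(P_\epsilon)}/[K_I\lambda_{\min}(P_\epsilon)\lambda_{\min}(Q_\epsilon)]$ appearing in the statement, which requires tracking constants through the Cauchy--Schwarz bound on $2x^\top P_\epsilon b\,\eta$ and through the comparison lemma applied to $V_\epsilon$.
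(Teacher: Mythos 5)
Your proposal follows essentially the same route as the paper: the cross-term quadratic Lyapunov function built from $P_\epsilon$, the derivative estimate $-x^\top Q_\epsilon x$ plus a perturbation term bounded via Cauchy--Schwarz and $\vert\eta\vert<\eta_\mathrm{max}$, the Schur-complement argument for an admissible range of $\epsilon$, the appeal to Theorem 4.18 of \cite{khalilbook}, and the final minimisation over $\epsilon$ (the paper merely works in the scaled coordinates $x=(\delta,\tilde\omega/K_I)$ rather than $(\delta,\tilde\omega)$, which only shifts the constants). The forward-invariance of the strip $\vert\delta\vert<\pi$ that you flag as the principal obstacle is in fact not addressed in the paper's proof either---it simply defines $\mathcal{W}_{2\epsilon}$ on $(-\pi\;\pi)\times\mathbb{R}$---so your explicit treatment of that step is, if anything, more careful than the original.
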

\begin{proof}
Consider the energy function
\begin{equation}
    \mathcal{W}_{2\epsilon}(\delta,\tilde\omega):=\frac{1}{2}\delta^2+\frac{1}{2K_I}\tilde{\omega}^2-\epsilon  \delta\frac{\tilde{\omega}}{K_I},
\end{equation}
defined in $(-\pi\;\pi)\times\mathbb{R}$, which is positive being $P_\epsilon>0$. By calculating the derivative of $\mathcal{W}_{2\epsilon}$ along the system's trajectories we get:
\begin{equation}
    \begin{aligned}
    \dot{\mathcal{W}}_{2\epsilon}=&\delta  \left(-K_P\delta+\hat{\omega}\right)+\frac{\hat{\omega}}{K_I}\left(-K_I\delta-\eta\right)+\\
    &-\epsilon \frac{\tilde{\omega}}{K_I}\left(-K_P\delta+\hat{\omega}\right)-\frac{\epsilon}{K_I}\delta\left(-K_I\delta-\eta\right)\\
    =&-\begin{bmatrix}
    \delta \\ \frac{\tilde{\omega}}{K_I}
    \end{bmatrix}^\top
   \begin{bmatrix}
    K_P-\epsilon K_I &-\frac{\epsilon}{2}K_P\\
    -\frac{\epsilon}{2}K_P& \epsilon
    \end{bmatrix}
    \begin{bmatrix}
    \delta \\ \frac{\tilde{\omega}}{K_I}
    \end{bmatrix}
   +\\
   &+\begin{bmatrix}
    \delta\\\frac{\tilde{\omega}}{K_I}
    \end{bmatrix}^\top \begin{bmatrix}
    \frac{\epsilon}{K_I}\\-1
    \end{bmatrix}\eta(t)
    \end{aligned}
\end{equation}
where in the first equivalence we substituted the estimation error dynamics, while the last one follows from \eqref{eq:ATAN-PLL-P}. Using Schur's complement it is then possible to show that there exists always an $\epsilon>0$ such that $P_\epsilon$ and $Q_\epsilon$ are positive definite. Hence $Q_\epsilon$ can be made positive so that the quadratic term eventually dominates the linear one, while preserving positive definiteness of $\mathcal{W}_{2\epsilon}$. More precisely, using standard norm inequalities and recalling that \mbox{$\vert \eta(t)\vert<\eta_\mathrm{max}$} we get:
\begin{equation}
    \dot{\mathcal{W}}_{\epsilon}<-\lambda_\mathrm{min}( Q_\epsilon)\|x\|^2+\frac{\sqrt{\epsilon^2+K_I^2}}{K_I}\eta_\mathrm{max}\|x\|,
\end{equation}
where $x:=(\delta,\tilde{\omega}/K_I)\in\mathbb{R}^2$, from which follows
\begin{equation}
    \dot{\mathcal{W}}_{\epsilon}<0\quad \forall \|x\|>\frac{\sqrt{\epsilon^2+K_I^2}}{\lambda_\mathrm{min}(Q)K_I}\eta_\mathrm{max}.
\end{equation}
The ultimate bound can be then obtained using Theorem 4.18 in~\cite{khalilbook}.
\end{proof}\vspace{0.05cm}

\section{Simulations}\label{sec:simulations}
To validate our result via simulations we consider a controlled VSC interfaced to a sufficiently strong, high-voltage AC grid, in accordance with Assumption~\ref{ass:2}. A conventional VSC current control scheme that guarantees fast regulation of active and reactive power under limited voltage variations is further considered, to comply with Assumption~\ref{ass:1}. Physical parameters and values of the reference signals are given in Table~\ref{tab:parameters}.
\begin{table}[t]
    \caption{Physical parameters and nominal operating points.}\label{tab:parameters}
    \centering
    \begin{tabularx}{\textwidth}{ll|ll|ll}
     $L$ & $65.20\;\textrm{mH}$ & $L_g$ & $32.60 \;\textrm{mH}$ & $P_\mathrm{nom}$ & $\phantom{00}1\;\textrm{GW}$    \\
     $R$ & $\phantom{0}1.02\;\Omega$ & $R_g$ & $\phantom{0}0.20\;\Omega$ & $V_\mathrm{nom}$ & $320\;\textrm{kV}$
     \end{tabularx} 
\end{table}
Based on this setup, we first illustrate performances and stability properties of both traditional and generalized ATAN-PLL under the assumption that the grid has relatively high inertia.\footnote{Similar results, not reported here for sake of brevity, apply to the (g)SRF-PLL.} More precisely, we consider step variations in the VSC power references over a time span of $0.5\;s$, inducing any $T=0.1\;s$ almost instantaneous variation of the phase, with the frequency remaining constant. In Fig.~\ref{fig:sim-1} we show active power responses resulting from synchronization of the VSC via two differently tuned ATAN-PLLs \eqref{eq:ATAN-PLL-dq}---referred as ATAN-PLL$_1$ and ATAN-PLL$_{10}$---and a gATAN-PLL \eqref{eq:gATAN-PLL-dq}. For all schemes we set the integral gain to $K_I=10^{3}$, which guarantees that the estimated RoA given by~\eqref{eq:RoA-ATAN-PLL} is sufficiently large for any of the operating points of interest. For the ATAN-PLLs we select the proportional gains $K_{P1}=2\cdot 10^{2}$ and $K_{P10}=10K_{P1}$. For the gATAN-PLL, we select $\Phi$ as an appropriate piecewise proportional function, allowing to adapt the proportional gain as a function of the output $\tilde y_2$. It is shown that all schemes guarantee a stable behavior and are able to ensure both the required power flow and a correct estimation of the frequency. It can be further seen that a suitable design of adaptive gains via the gATAN-PLL can improve the PLL performances.
\begin{figure}
    \centering
    \includegraphics[width=0.5 \textwidth]{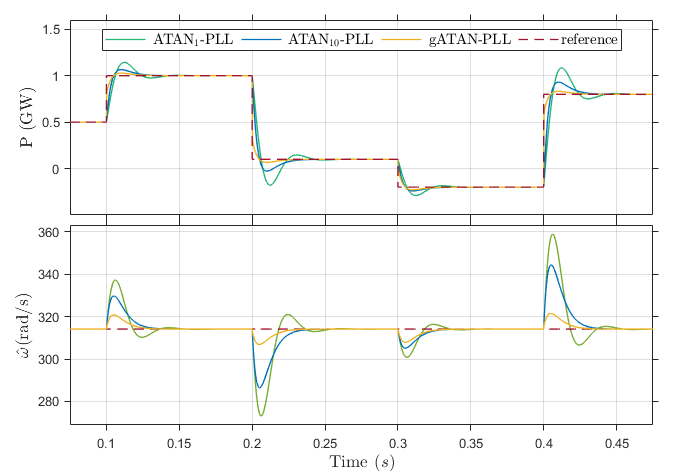}
    \caption{Active power and frequency responses under ATAN-PLL$_1$, ATAN-PLL$_{10}$ and gATAN-PLL, for a grid with relatively high-inertia.}
    \label{fig:sim-1}
\end{figure}

As a second scenario, we assume that the grid is characterized by a low inertia and that an inadvertent change in the power demand generates a deviation of the frequency from its nominal value. More precisely, similar to~\cite{rueda}, we suppose that the frequency is constant and equal to $f:=50 \;Hz$ during the time interval from $0$ to $1\;s$ and immediately afterwards the following additive disturbance is considered:
\begin{equation}
    \Delta\omega(t):=-8\pi e^{-0.1(t-1)}\sin(0.2(t-1)).
\end{equation}
Frequency estimates under SRF- and ATAN-PLL are illustrated in Fig.~\ref{fig:sim-2}, where it is shown that they have an almost identical response, which converges to the actual value of the frequency after approximately $7\;s$, in accordance with the result of Proposition~\ref{prop:robustness}.
\begin{figure}
    \centering
    \includegraphics[width=0.5 \textwidth]{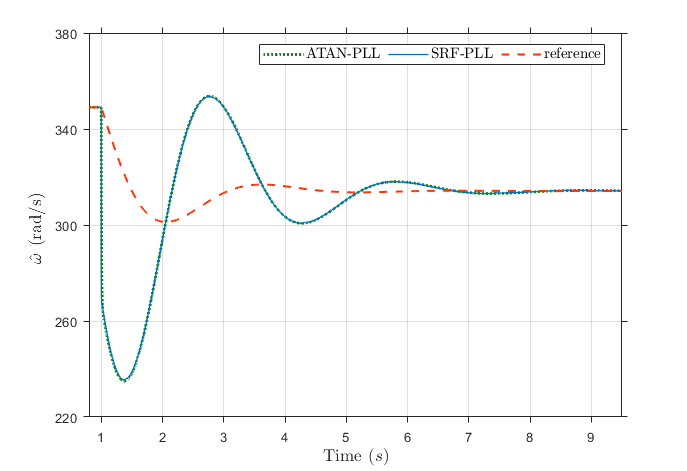}
    \caption{Frequency responses under SRF-PLL and ATAN-PLL under bounded variation of the frequency.}
    \label{fig:sim-2}
\end{figure}
\section{Conclusions}\label{sec:conclusions}
In this paper we constructively derived, using passivity arguments, two classes of PLL synchronization mechanisms for grid-connected VSC that ensure a correct estimation of the angle and of the frequency at the PCC. It is shown that (almost) global synchronization can be obtained for both classes of PLLs using representations of the system in polar and $dq$ coordinates, and that the conventional SRF- and ATAN-PLL correspond to particular instances of such generalized synchronization schemes. Finally, we have shown that these schemes are robust to variations of the frequency---a scenario particularly relevant for VSCs connected to grids with low inertia. Future investigation will be focused on the extension of such result to the case of weak grids, \textit{i.e.} grid characterized by a low SCR, by lifting Assumption~\ref{ass:1} and Assumption~\ref{ass:2}.

\bibliography{bibliografia}
\bibliographystyle{ieeetr}

\end{document}